\DeclareMathOperator*{\argmax}{arg\,max}
\crefname{appsec}{Appendix}{Appendices}
\crefname{equation}{}{}
\newtheorem{theorem}{Theorem}
\newtheorem{lemma}[theorem]{Lemma}
\newtheorem{prop}{Proposition}
\definecolor{darkred}{rgb}{0.6,0.0,0.0}
\definecolor{darkgreen}{rgb}{0,0.50,0}
\definecolor{lightblue}{rgb}{0.0,0.42,0.91}
\definecolor{orange}{rgb}{0.99,0.48,0.13}
\definecolor{grass}{rgb}{0.18,0.80,0.18}
\definecolor{pink}{rgb}{0.97,0.15,0.45}
\definecolor{mypagecolor}{RGB}{249,249,249}
\newacronym{MEDA}{MEDA}{McGill Engineering Doctoral Award}
\newacronym{RB}{RB}{resource block} 
\newacronym{IoT}{IoT}{Internet of things}
\newacronym{MTC}{MTC}{machine-type communication}
\newacronym{mMTC}{mMTC}{massive machine-type communication}
\newacronym{IIoT}{IIoT}{industrial Internet of Things}
\newacronym{HTC}{HTC}{Human-type communication}
\newacronym{MA}{MA}{multiple access}
\newacronym{PA}{PA}{priority access}
\newacronym{RACH}{RACH}{random access channel}
\newacronym{PRACH}{PRACH}{pysical random access channel}
\newacronym{PG}{PG}{power grid}
\newacronym[\glslongpluralkey={smart grids}, \glsshortpluralkey={SGs}]{SG}{SG}{smart grid} 
\newacronym{MG}{MG}{micro grids}
\newacronym{DOE}{DOE}{Department of Energy}
\newacronym{AMI}{AMI}{advanced metering infrastructure}
\newacronym{AMR}{AMR}{automatic meter reading}
\newacronym{DLC}{DLC}{direct load control}
\newacronym{SM}{SM}{smart meter}
\newacronym{HAN}{HAN}{home area network}
\newacronym{DR}{DR}{demand response}
\newacronym{EV}{EV}{electric vehicle}
\newacronym{WASA}{WASA}{wide-area situational awareness}
\newacronym{PMU}{PMU}{phasor measurement unit}
\newacronym{DER}{DER}{distributed energy resources}
\newacronym{DGM}{DGM}{distributed grid management}
\newacronym{DA}{DA}{distribution automation}
\newacronym{SA}{SA}{substation automation}
\newacronym{NIST}{NIST}{National Institute of Standards and Technology}
\newacronym{FERC}{FERC}{Federal Energy Regulatory Commission}
\newacronym{PV}{PV}{photovoltaic}
\newacronym{SCADA}{SCADA}{supervisory control and data acquisition}
\newacronym{U.S.}{U.S.}{United States}
\newacronym{UE}{UE}{user equipment}
\newacronym{CB-MA}{CBMA}{contention-based multiple access}
\newacronym{CF-MA}{CFMA}{conflict-free multiple access}
\newacronym{FDMA}{FDMA}{frequency division multiple access}
\newacronym{TDMA}{TDMA}{time division multiple access}
\newacronym{CDMA}{CDMA}{code division multiple access}
\newacronym{SDMA}{OMA}{spatial division multiple access}
\newacronym{LDS-CDMA}{LDS-CDMA}{low density signature code division multiple access}
\newacronym{PDMA}{PDMA}{pattern division multiple access}
\newacronym{MUSA}{MUSA}{multi-user shared access}
\newacronym{SCMA}{SCMA}{sparce code multiple access}
\newacronym{IDMA}{IDMA}{interleave division multiple access}
\newacronym{SPMA}{SPMA}{statistical based multiple access}
\newacronym{DS-CDMA}{DS-CDMA}{direct sequence code division multiple access}
\newacronym{FH-CDMA}{FH-CDMA}{frequency hopping code division multiple access}
\newacronym{IRSA}{IRSA}{irregular repetition slotted ALOHA}
\newacronym{OMA}{OMA}{orthogonal multiple access}
\newacronym{NOMA}{NOMA}{non-orthogonal multiple access}
\newacronym{CD-NOMA}{CD-NOMA}{code domain non-orthogonal multiple access}
\newacronym{PD-NOMA}{PD-NOMA}{power domain non-orthogonal multiple access}
\newacronym{MIMO}{MIMO}{multiple input multiple output}
\newacronym{QAM}{QAM}{quadrature amplitude modulation}
\newacronym{FEC}{FEC}{forward error correction}
\newacronym{MAC}{MAC}{medium access control}
\newacronym{SIC}{SIC}{successive interference cancellation}
\newacronym{BS}{BS}{base station}
\newacronym{LTE}{LTE}{long-term evolution}
\newacronym{LTE-A}{LTE-A}{long-term evolution advanced}
\newacronym{SE}{SE}{spectral efficiency}
\newacronym{CSI}{CSI}{channel state information}
\newacronym{SUD}{SUD}{single-user detection}
\newacronym{MUD}{MUD}{multi-user detection}
\newacronym{MAP}{MAP}{maximum \textit{a posteriori}}
\newacronym{ML}{ML}{maximum likelihood}
\newacronym{RCML}{RCML}{reduced complexity maximum likelihood}
\newacronym{i.i.d.}{i.i.d.}{independent and identically distributed}
\newacronym{FDD}{FDD}{frequency division duplexing}
\newacronym{TDD}{TDD}{time division duplexing}
\newacronym{PSD}{PSD}{power spectral density}
\newacronym{MAI}{MAI}{multiple access interference}
\newacronym{QoS}{QoS}{quality of service}
\newacronym{TTNT}{TTNT}{Tactical Targeting Network Technology}
\newacronym{COS}{COS}{channel occupancy statistic}
\newacronym{PPP}{PPP}{Poisson point process}
\newacronym{3GPP}{3GPP}{3rd Generation Partnership Project}
\newacronym{H-UE}{H-UE}{high-priority user equipment}
\newacronym{L-UE}{L-UE}{low-priority user equipment}
\newacronym{ACB}{ACB}{access class barring}
\newacronym{D-ACB}{D-ACB}{dynamic access class barring}
\newacronym{QPAC}{QPAC}{queue-aware priority access classification}
\newacronym{PRADA}{PRADA}{prioritized random access with dynamic access barring}
\newacronym{PLPS}{PLPS}{priority-based load adaptive preamble separation}
\newacronym{CRA}{CRA}{Coded Random Access}
\newacronym{RHS}{RHS}{right hand side}
\newacronym{LHS}{LHS}{left hand side}
\newacronym{MAE}{MAE}{mean absolute error}
\newacronym{MSE}{MSE}{mean squared error}
\newacronym{SQP}{SQP}{sequential quadratic programming}
\newacronym{RL}{RL}{reinforcement learning}
\newacronym{DRL}{DRL}{distributed reinforcement learning}
\newacronym{MDP}{MDP}{Markov decision process}
\newacronym{MAB}{MAB}{multi-armed bandit}
\newacronym{UCB}{UCB}{upper confidence bound}
\newacronym{AS}{AS}{action space}
\newacronym{e-greedy}{$\epsilon$-greedy}{epsilon-greedy}
\newacronym{DEG}{DEG}{decaying epsilon-greedy}
\newacronym{DQN}{DQN}{deep Q-network}
\newacronym[\glsshortpluralkey={algs.}]{alg}{alg.}{algorithm}
\newacronym{CE}{CE}{cross entropy}
\newacronym{PPO}{PPO}{proximal policy optimization}
\newacronym{SARSA}{SARSA}{state-action-reward-state-action}
\newacronym{CMMPP}{CMMPP}{Coupled Markov Modulated Poisson Processes}
\newacronym{PSS}{PSS}{primary synchronization signal}
\newacronym{SSS}{SSS}{secondary synchronization signal}
\newacronym{MIB}{MIB}{master information block}
\newacronym{SIB}{SIB}{system information block}
\newacronym{SIB2}{SIB2}{system information block 2}
\newacronym{PDCCH}{PDCCH}{physical downlink control channel}
\newacronym{ZC}{ZC}{Zadoff-Chu}
\newacronym{TA}{TA}{timing advance}
\newacronym{CP}{CP}{cyclic prefix}
\newacronym{GT}{GT}{guard time}
\newacronym{RA}{RA}{random access}
\newacronym{5G}{5G}{fifth generation}
\newacronym{NR}{NR}{new radio}
\newacronym{RAR}{RAR}{random access response}
\newacronym{RRC}{RRC}{radio resource control}
\newacronym{DCI}{DCI}{downlink control information}
\newacronym{PUSCH}{PUSCH}{physical uplink shared channel}
\newacronym{PUCCH}{PUCCH}{physical uplink control channel}
\newacronym{PHICH}{PHICH}{physical hybridARQ indicator channel}
\newacronym{PCFICH}{PCFICH}{physical control format indicator channel}
\newacronym{OFDM}{OFDM}{orthogonal frequency division multiplexing}
\newacronym{OFDMA}{OFDMA}{orthogonal frequency division multiple access}
\newacronym{RE}{RE}{resource element}
\newacronym{CBRA}{CBRA}{contention-based random access}
\newacronym{CFRA}{CFRA}{contention-free random access}
\newacronym{PAPR}{PAPR}{peak-to-average power ratio}
\newacronym{GCL}{GCL}{generalized chirp-like}
\newacronym{PN}{PN}{pseudorandom noise}
\newacronym{TTI}{TTI}{transmission time interval}
\newacronym{RA-RNTI}{RA-RNTI}{Random Access Radio Network Temporary Identifier}
\newacronym{C-RNTI}{C-RNTI}{Cell Radio Network Temporary Identifier}
\def\BibTeX{{\rm B\kern-.05em{\sc i\kern-.025em b}\kern-.08em T\kern-.1667em\lower.7ex\hbox{E}\kern-.125emX}}
\begin{document}
\makeatletter
\def\endthebibliography{%
    \def\@noitemerr{\@latex@warning{Empty `thebibliography' environment}}%
    \endlist
}
\makeatother

\title{Load Estimation in a Two-Priority mMTC Random Access Channel}

\author{\IEEEauthorblockN{Ahmed O. Elmeligy}
\IEEEauthorblockA{\textit{McGill University} \\
ahmed.elmeligy@mail.mcgill.ca}
\and
\IEEEauthorblockN{Ioannis Psaromiligkos}
\IEEEauthorblockA{\textit{McGill University} \\
ioannis.psaromiligkos@mcgill.ca}
\and
\IEEEauthorblockN{Au Minh}
\IEEEauthorblockA{\textit{Hydro-Quebéc Research Institute (IREQ)} \\
au.minh2@hydroquebec.com}
}

\maketitle

\begin{abstract}
    The use of cellular networks for massive machine-type communications (mMTC) is an appealing solution due to the wide availability of cellular infrastructure.
    
    Estimating the number of devices (network load) is vital for efficient allocation of the available resources, especially for managing the random access channel (RACH) of the network.
    This paper considers a two-priority RACH and proposes two network load estimators: a maximum likelihood (ML) estimator and a reduced complexity (RCML) variant. 
    The estimators are based on a novel model of the random access behavior of the devices coupled with a flexible analytical framework to calculate the involved probabilities.
    Monte Carlo simulations demonstrate the accuracy of the proposed estimators for different network configurations.
    Results depict increased estimation accuracy using non-uniform preamble selection probabilities compared to the common uniform probabilities at no extra computational cost.
\end{abstract}
\begin{IEEEkeywords}
    Load estimation, massive machine-type communications (mMTC), random access channel (RACH).
\end{IEEEkeywords}

\section{Introduction}
\label{sec: introduction} 
The use of already deployed cellular infrastructure is an appealing solution to providing wireless connectivity for \gls{mMTC} applications~\cite{leyva2016performance}.
At the same time, combining multi-priority applications in a single network is a promising cost-efficient approach as it avoids deploying multiple networks.
However, servicing the massive number of devices in \gls{mMTC} poses a significant challenge to the cellular network, particularly, by congesting and overloading the \gls{RACH}~\cite{condoluci2015toward, wiriaatmadja2014hybrid}.
The \gls{RACH}, which is usually available periodically, is the channel the devices use to request access to the network~\cite{wei2014modeling}.
Each time the \gls{RACH} is available is known as a \gls{RACH} slot, and at each slot, a device (henceforth referred to as a \gls{UE}) randomly chooses a preamble from a finite set of preambles and transmits it to the \gls{BS}; if a preamble is selected by only one device, the \gls{BS} can successfully decode it.
On the other hand, if two or more devices choose the same preamble a collision happens~\cite{arouk2016accurate}.

The network load is defined as the number of \glspl{UE} the network serves, and the ratio of \glspl{UE} to the number of preambles is known as the overloading factor; a high overloading factor indicates that the network is congested, thus increasing the number of collisions and reducing the network throughput.
Knowing the load is vital for the network operator to decide how to allocate the available finite resources to the \gls{UE} pool.

Several methods have been proposed to estimate the network load in the \gls{RACH}.
A closed-form expression for the joint probability of the number of successful and collided \gls{UE} preamble transmissions within a \gls{RACH} slot is derived in~\cite{tello2018efficient}.
\Gls{ML} estimation and Bayesian techniques are used to estimate the network load.
Two algorithms are proposed based on whether the number of successes and collisions are known at the \gls{BS} or only the number of successes is known.
However, the probability of unselected preambles is not considered, which is additional information known to the \gls{BS} and can be used to improve the estimation accuracy.

A combinatorial model presented in~\cite{wei2014modeling} investigates the transient behavior of \glspl{RACH} with bursty arrivals.
The model obtains the average number of successful and collided \glspl{UE} in a single \gls{RACH} slot.
The authors then estimate the number of contending \glspl{UE} in each slot.
Simulations show that the estimation is accurate in the case of a high \gls{UE} count but suffers for few \gls{UE}. 

The number of active \glspl{UE} in \gls{IRSA} access protocol is estimated in~\cite{sun2018detecting}; \gls{SIC} is adopted at the receiver to recover the transmitted packets.
The authors propose a \gls{MAP} detector for the number of unrecovered \glspl{UE} in a specific \gls{SIC} iteration.
A sub-optimal detector is also proposed to reduce the computational complexity via approximations.
Numerical results illustrate that the suboptimal detector's \gls{MAE} increases with the overloading.

In~\cite{duan2016d}, an estimate of the network load is used in two dynamic \gls{ACB} algorithms that operate without the \gls{BS} knowing the number of devices.
The first algorithm optimizes the \gls{ACB} factor for a fixed number of preambles, where the \gls{ACB} factor determines the backoff time of collided devices.
The second algorithm optimizes the \gls{ACB} factor and the number of available preambles.

Finally, a backoff scheme is developed by~\cite{access2021medium} that uses a backoff indicator to determine a random waiting period for the collided \glspl{UE}.
The backoff indicator is dynamically adjusted in~\cite{althumali2020dynamic} based on the availability of resources and the number of backlogged \glspl{UE}.

In all these works, only a single \gls{UE} priority class is considered; hence superimposing numerous applications with varying \gls{QoS} requirements in the same network to save resources would not be possible.
Additionally, only a single \gls{RACH} slot is used, and the estimation techniques are not extended to multiple \gls{RACH} slots, which could significantly improve the estimator's accuracy. 

Several additional works consider single and multi-UE priority classes, focusing on improving the \gls{RACH} performance by first introducing a performance metric, and then optimizing the \gls{RACH} parameters, such as the number of preambles, the \gls{RACH} periodicity, or the \gls{ACB} factor, to maximize the metric~\cite{chowdhury2022queue, althumali2022priority, tello2017performance, alvi2021performance, duan2013dynamic}. 
However, no network load estimation techniques are proposed, and the number of contending \glspl{UE} is assumed to be known.

In this paper, we propose a load estimation method that addresses the above shortcomings. 
Specifically, the main contributions of this paper are as follows:
\begin{itemize}
    \item We present a novel approach to model a two-priority \gls{RACH}, which allows us to define access patterns that describe the random access behavior of \glspl{UE} as observed by the \gls{BS}.
    \item We develop an analytical framework to calculate the probability of observing a given access pattern.
    \item Our approach allows for non-uniform preamble selection probabilities. To our knowledge, the literature lacks a model that considers non-uniform preamble selection, which can provide greater flexibility when allocating resources to different \gls{UE} priority classes.
    \item We propose an \gls{ML} estimator of the network load from observed access patterns over multiple \gls{RACH} slots.
    \item We formulate a \gls{RCML} estimator that discards a portion of the available information at the \gls{BS}.
    \item Our results show an increase in estimation accuracy using non-uniform preamble selection probabilities at no additional computational cost relative to the traditional uniform approach.
\end{itemize}
Monte Carlo simulation results validate the analytical framework and showcase the proposed estimators' accuracy for various system setups.

The rest of this paper is organized as follows:
\cref{sec: system_model} presents the system model that formulated the observations of the \gls{RACH} at the \gls{BS} as patterns. 
The \gls{ML} and  \gls{RCML} estimators are developed in \cref{sec: proposed_method} along with the algorithm that computes the pattern probabilities. 
\cref{sec: simulation} presents the simulation setup and numerical results that showcase the estimators' accuracies. 
Finally, \cref{sec: conclusion} concludes the paper.

\section{System Model and Problem Statement}
\label{sec: system_model}
We consider a system model with $n^h$ \glspl{H-UE} and $n^l$ \glspl{L-UE}.
The \glspl{H-UE} and \glspl{L-UE} differ in \gls{QoS} requirements, whereas the \glspl{H-UE} have stricter requirements, such as lower latency and higher data rates.
Note that $n^h$ and $n^l$ are fixed and unknown to the \gls{BS}. 
Transmitting preambles over the \gls{RACH} can be viewed as having a finite number of \glspl{RB} randomly accessed by the pool of \glspl{UE}.
Time is discretized into slots, indexed by $t$, and each slot is divided into $M$ \glspl{RB}.
We assume that at slot $t$, each of the $n = n^h + n^l$ \glspl{UE} randomly chooses an \gls{RB}.
The probability that a \gls{H-UE} chooses \gls{RB} $i$ is $p^h_i$; likewise, the probability that an \gls{L-UE} chooses \gls{RB} $i$ is $p^l_i$. 
Let $\bm{p}^h = [p^h_1, \dots, p^h_M]$ be the vector collecting the \gls{RB} selection probabilities for an \gls{H-UE}, while $\bm{p}^l = [p^l_1, \dots, p^l_M]$ be the corresponding vector for an \gls{L-UE}.
Clearly, we have $\sum_{i=1}^M p^h_i = 1$ and $\sum_{i=1}^M p^l_i = 1$. 

For each of the $M$ \glspl{RB}, one of the four events listed below may occur:
\begin{enumerate}
    \item Event $h$: A single \gls{H-UE} selects the \gls{RB}.
    \item Event $l$: A single \gls{L-UE} selects the \gls{RB}.
    \item Event $\phi$: No \gls{UE} selects the \gls{RB}, i.e., the \gls{RB} is unoccupied or empty.
    \item Event $x$: More than one \gls{UE} selects the \gls{RB}, i.e., a collision occurs.
\end{enumerate}
Each of these events can be detected by the \gls{BS}~\cite{zhang2022ppo}.
However, in the case of a collision, the \gls{BS} does not know how many \glspl{UE} are involved~\cite{wu2012fast, galinina2013stabilizing, wang2015optimal} nor their priority. 
Thus, at each time slot $t$, the \gls{BS} observes an access pattern, $\bm{\pi}_t \in \{h,l,\phi,x\}^M$, that is the sequence of events that occur across the $M$ \glspl{RB}; \cref{fig: patterns} shows an example of such patterns observed over $T$ time slots.

\setlength{\textfloatsep}{2.5pt}
\setlength{\floatsep}{2.5pt}
\setlength{\intextsep}{2.5pt}
\begin{figure}
    \centering
    \begin{adjustbox}{max width=0.35\textwidth}
        \includegraphics{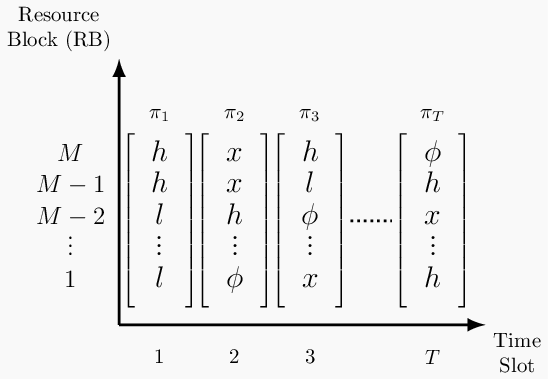}
    \end{adjustbox}
    \caption{Example of a pattern sequence over $T$ time slots.}
    \label{fig: patterns}
\end{figure}

The problem we consider in this paper is to estimate the network load defined as the number of \glspl{H-UE}, $n^h$, and \glspl{L-UE}, $n^l$, by observing a sequence of $T$ access patterns, $\bm{\pi}_1, \bm{\pi}_2, \ldots, \bm{\pi}_T$.
We assume that the preamble selection probability vectors $\bm{p}^h$ and $\bm{p}^l$ are known at the \gls{BS}, and that $\bm{\pi}_1, \bm{\pi}_2, \ldots, \bm{\pi}_T$ are independent and identically distributed (i.i.d.).
This is a common assumption~\cite{sun2018detecting}, meant to simplify the system model. Extension of the model to the non-i.i.d. case is possible through the use of conditional probabilities.
Such extension, however, relies heavily on the implementation of the RACH protocol, e.g., number of retries, back-off time, and is beyond the scope of this paper.

\section{Proposed Method}
\label{sec: proposed_method}
This section is divided into thee parts. First, we propose a \gls{ML} estimation technique to estimate the network load given the observed patterns $\bm{\pi}_1, \bm{\pi}_2, \ldots, \bm{\pi}_T$.
As the proposed algorithm requires the evaluation of the probabilities of the observed patterns,  
in \cref{subsec: pattern probability} we present an algorithm to compute these probabilities as a function of the number of \glspl{H-UE} and \glspl{L-UE}.
Finally, we develop an \gls{RCML} estimator that discards a portion of the available information at the \gls{BS} to reduce the computational complexity.

\subsection{Maximum Likelihood Estimation}
\label{subsec: mle}
Let us denote by $P(\,\cdot\,; n^h, n^l)$ the probability of observing a given pattern or sequence of patterns, which, of course, depends on $n^h$ and $n^l$. 
Then, the \gls{ML} estimator is given by:
\begin{align}
    \label{eq: load estimator}
    \hat{n}^h, \hat{n}^l &= 
    \argmax_{n^h, n^l} P(\bm{\pi}_1, \bm{\pi}_2, \ldots, \bm{\pi}_T; n^h, n^l) \nonumber \\
    &=\argmax_{n^h, n^l} \prod_{t=1 }^TP(\bm{\pi}_t; n^h, n^l)
\end{align}
where the last equality holds due to the i.i.d. assumption on $\bm{\pi}_1, \bm{\pi}_2, \ldots, \bm{\pi}_T$. 

\subsection{Derivation of Pattern Probabilities}
\label{subsec: pattern probability}
We start with an alternative representation of an access pattern, $\bm{\pi}$.
Note that the subscript $t$ and the dependence on $n^h$ and $n^l$ are omitted throughout this section to avoid overloading the notation.
Define $H_i$ as the event that one \gls{H-UE} occupies \gls{RB} $i$.
Similarly, define $L_i$, $\Phi_i$, and $X_i$ as the events that one \gls{L-UE} occupies \gls{RB} $i$, no \gls{UE} occupies \gls{RB} $i$ (RB $i$ is empty), and two or more \glspl{UE} occupy \gls{RB} $i$ (collision occurs), respectively.
Let set $\mathcal{H}$ contain the indices of the \glspl{RB} that contain only one \gls{H-UE}.
Likewise, sets $\mathcal{L}$, $\mathbf{\mathit{\Phi}}$, and $\mathcal{X}$ contain the indices of the \glspl{RB} that contain only one \gls{L-UE}, are empty, and have collisions, respectively.
Hence, a pattern can be written as the intersection of the events $H_i$ for $i \in \mathcal{H}$, $L_i$ for $i \in \mathcal{L}$, $\Phi_i$ for $i \in \mathbf{\mathit{\Phi}}$, and $X_i$ for $i \in \mathcal{X}$:

\begin{equation}
    \label{eq: pattern sequence}
    \bm{\pi} = \overline{H}, \overline{L}, \overline{\Phi}, \overline{X}
\end{equation}

where $\overline{H}=\bigcap_{i\in{\mathcal H}}H_i$, $\overline{L}=\bigcap_{i\in{\mathcal L}}L_i$, $\overline{\Phi}=\bigcap_{i\in\mathbf{\mathit{\Phi}}}\Phi_i$, and $\overline{X}=\bigcap_{i\in{\mathcal X}}X_i$
In \cref{fig: pattern example}, we show an example for a pattern of length $M=8$ along with
the corresponding sequence of events $H_i$, $L_i$, $\Phi_i$, and $X_i$, as well as the index sets $\mathcal{H}$, $\mathcal{L}$, $\mathbf{\mathit{\Phi}}$, and $\mathcal{X}$.
\setlength{\textfloatsep}{2.5pt}
\setlength{\floatsep}{2.5pt}
\setlength{\intextsep}{2.5pt}
\begin{figure}
    \centering
    \begin{adjustbox}{max width=0.38\textwidth}
        \includegraphics{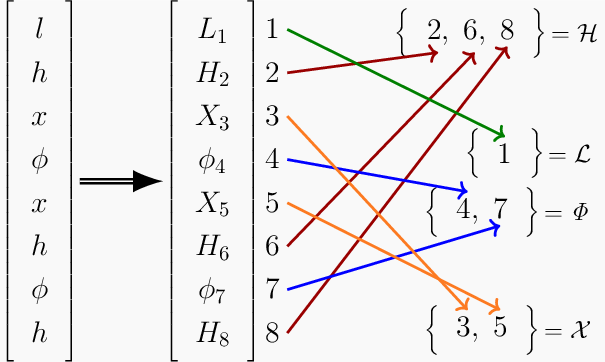}
    \end{adjustbox}
    \caption{Example pattern of length $M=8$. In slots 2, 6, and 8 we have collision-free \gls{H-UE} transmissions, denoted as events $H_2$, $H_6$ and $H_8$; hence, ${\mathcal H} = \{2, 6, 8\}$. Similarly, the event in slot 1 a collision-free \gls{L-UE} transmission, the events in slots 4 and 7 no transmissions, and the events in slots 3 and 5 signify collisions.}
    \label{fig: pattern example}
\end{figure}

From \cref{eq: pattern sequence}, we can decompose $P(\bm{\pi})$ as follows:

\begin{equation}
    \label{eq: conditional pattern probability}
    P(\bm{\pi}) = 
    P\left(\overline{H}\right)
    P\left(\overline{L} \middle| \overline{H}\right)
    P\left(\overline{\Phi} \middle| \overline{H}, \overline{L}\right)
    P\left(\overline{X} \middle| \overline{H}, \overline{L}, \overline{\Phi}\right)
\end{equation}

Next, we derive the probabilities in the \gls{RHS} of \cref{eq: conditional pattern probability}.

The first factor in \cref{eq: conditional pattern probability}, $P\left(\overline{H}\right)$, is the probability that each \gls{RB} in $\mathcal{H}$ is occupied by one \gls{H-UE}. Let $\mathcal{H} = \{h_1, h_2, \dots, h_H\}$. Then we have:
\begin{multline}
    \label{eq: P(H)}
    P\left(\overline{H}\right) =
    P(H_{h_1})P(H_{h_2}|H_{h_1}) \times \dots \times \\
    P(H_{h_H}|H_{h_1}, \dots, H_{h_{H-1}})
\end{multline}

The probabilities in \cref{eq: P(H)} can be calculated using the following proposition. 
    \begin{prop}
        \label{prop: P(H)}
        The first factor in the \gls{RHS} of \cref{eq: P(H)} is given by:
        \begin{equation}
            \label{eq: P(h_1)}
            P(H_{h_1}) = n^h p^h_{h_1} (1-p^h_{h_1})^{n^h-1} (1-p^l_{h_1})^{n^l}
        \end{equation}
        The rest of the factors are given by:
        \begin{multline}
            \label{eq: P(h_k|h_1 ... h_{k-1})}
            P(H_{h_k}|H_{h_1}, \dots, H_{h_{k-1}}) = \\
            N^h_k \hat{p}^h_{h_k} (1-\hat{p}^h_{h_k})^{N^h_k-1} (1-\hat{p}^l_{h_k})^{n^l}
        \end{multline}
        where $N^h_k$, $\hat{p}^h_{h_k}$, and $\hat{p}^l_{h_k}$ are calculated as follows:
        \begin{align}
        \label{eq: normalization P(H)0}
        N^h_k &= n^h -(k-1)\\
        \label{eq: normalization p(H)1}
        \hat{p}^h_{h_k} &= \frac{p^h_{h_k}}{p^h_{h_k}+\ldots+p^h_{H} + \sum_{i \in \mathcal{L}, \mathbf{\mathit{\Phi}}, \mathcal{X}} p^h_{i}} \\
        \label{eq: normalization p(H)2}
        \hat{p}^l_{h_k} &= \frac{p^l_{h_k}}{p^l_{h_k}+\ldots+p^l_{H} + \sum_{i \in \mathcal{L}, \mathbf{\mathit{\Phi}}, \mathcal{X}} p^l_{i}}
        \end{align}
    \end{prop}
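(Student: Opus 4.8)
The plan is to handle the first factor by a direct counting argument and to obtain each subsequent factor as a ratio of two closed-form joint probabilities. I rely only on the system model's assumption that the $n^h$ H-UE selections and the $n^l$ L-UE selections are mutually independent, with a given H-UE (resp.\ L-UE) picking RB $i$ with probability $p^h_i$ (resp.\ $p^l_i$).

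For \cref{eq: P(h_1)}, the event $H_{h_1}$ means that exactly one H-UE and no L-UE land on RB $h_1$. Choosing which of the $n^h$ H-UEs occupies $h_1$ contributes a factor $n^h p^h_{h_1}$, the remaining $n^h-1$ H-UEs avoiding $h_1$ contributes $(1-p^h_{h_1})^{n^h-1}$, and all $n^l$ L-UEs avoiding $h_1$ contributes $(1-p^l_{h_1})^{n^l}$; independence lets me multiply these, which is exactly \cref{eq: P(h_1)}.

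For the conditional factors, I define $G_k:=\bigcap_{j=1}^{k}H_{h_j}$ (so $G_H=\overline{H}$) and first compute $P(G_k)$ in closed form. Event $G_k$ forces exactly one --- necessarily distinct --- H-UE onto each of $h_1,\dots,h_k$, the other $n^h-k$ H-UEs to avoid $\{h_1,\dots,h_k\}$, and all $n^l$ L-UEs to avoid $\{h_1,\dots,h_k\}$. Counting the ordered assignments of $k$ distinct H-UEs to these $k$ RBs gives the falling factorial $n^h!/(n^h-k)!$, and independence factorizes the remainder:
\[
P(G_k)=\frac{n^h!}{(n^h-k)!}\Bigl(\prod_{j=1}^{k}p^h_{h_j}\Bigr)\Bigl(1-\sum_{j=1}^{k}p^h_{h_j}\Bigr)^{\!n^h-k}\Bigl(1-\sum_{j=1}^{k}p^l_{h_j}\Bigr)^{\!n^l}.
\]
The sought factor is $P(H_{h_k}\mid G_{k-1})=P(G_k)/P(G_{k-1})$. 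The falling-factorial ratio collapses to $n^h-k+1=N^h_k$ (\cref{eq: normalization P(H)0}), the H-product ratio leaves $p^h_{h_k}$, and the two remaining-mass terms telescope. Here I use the identity $1-\sum_{j=1}^{k-1}p^h_{h_j}=\sum_{j=k}^{H}p^h_{h_j}+\sum_{i\notin\mathcal{H}}p^h_i$, which follows from $\sum_{i=1}^{M}p^h_i=1$ and the partition $\{1,\dots,M\}=\mathcal{H}\cup\mathcal{L}\cup\mathbf{\mathit{\Phi}}\cup\mathcal{X}$; this is precisely the denominator in \cref{eq: normalization p(H)1}, so that $p^h_{h_k}/(1-\sum_{j<k}p^h_{h_j})=\hat{p}^h_{h_k}$ and $(1-\sum_{j\le k}p^h_{h_j})/(1-\sum_{j<k}p^h_{h_j})=1-\hat{p}^h_{h_k}$, producing the H-part $N^h_k\hat{p}^h_{h_k}(1-\hat{p}^h_{h_k})^{N^h_k-1}$. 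The identical telescoping for the L-UE mass, using \cref{eq: normalization p(H)2}, yields $(1-\hat{p}^l_{h_k})^{n^l}$, completing the conditional formula.

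The step I expect to be the main obstacle is the combinatorial bookkeeping in $P(G_k)$: justifying that the $k$ occupied RBs must be held by distinct H-UEs, so the multiplicity is the falling factorial $n^h!/(n^h-k)!$ rather than $(n^h)^k$, and correctly enforcing that every unplaced H-UE and every L-UE avoids the whole occupied set $\{h_1,\dots,h_k\}$ (this avoidance is exactly what keeps those RBs collision-free and L-UE-free, as $G_k$ demands). The paired notational subtlety is recognizing that the renormalizing mass $1-\sum_{j<k}p_{h_j}$ equals the sum over the not-yet-resolved RB indices appearing in \cref{eq: normalization p(H)1,eq: normalization p(H)2}, rather than a sum restricted to $\mathcal{H}$; this identification is what makes the telescoped ratios coincide with the stated $\hat{p}^h_{h_k}$ and $\hat{p}^l_{h_k}$.
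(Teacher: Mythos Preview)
Your argument is correct, and it takes a genuinely different route from the paper's own proof. The paper proceeds sequentially: it first establishes a renormalization lemma (given that an event $A_k$ from a finite partition does not occur, the remaining events have probabilities rescaled by $1/(1-\pi_k)$), and then argues that conditioning on $H_{h_1},\dots,H_{h_{k-1}}$ amounts to deleting RBs $h_1,\dots,h_{k-1}$ from every remaining UE's choice set, reducing the H-UE count by $k-1$, and renormalizing the selection probabilities via the lemma; the binomial-type formula is then reapplied on the reduced system. Your approach instead writes the unconditional joint $P(G_k)=P\bigl(\bigcap_{j\le k}H_{h_j}\bigr)$ in closed form once, and recovers each conditional factor as the telescoping ratio $P(G_k)/P(G_{k-1})$. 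What your route buys is that the conditioning step becomes pure algebra: you never need the (slightly informal, in the paper) claim that the residual UEs behave like an i.i.d.\ system over the surviving RBs with renormalized weights, because that fact falls out of the ratio automatically. What the paper's route buys is modularity: its renormalization lemma and ``remove-and-rescale'' viewpoint are reused verbatim for \cref{prop: P(L|H),prop: P(E|H L),prop: P(X|H L E)}, and it maps directly onto the recursive structure of \cref{alg: P(H)}.
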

The proof of \cref{prop: P(H)} is given in Appendix~\ref{app: P(H) proof}.
$P\left(\overline{H}\right)$ can be calculated recursively using 
\cref{alg: P(H)} also given in Appendix~\ref{app: rec algs}.

Following the same procedure, the second factor in \cref{eq: conditional pattern probability}, $P\left(\overline{L} \middle| \overline{H}\right)$, is the probability that each \gls{RB} in $\mathcal{L}$ is occupied by one \gls{L-UE} given that 
the \glspl{RB} in $\mathcal{H}$ are unavailable. 
Let $\mathcal{L}=\{l_1, l_2, \dots, l_L\}$.
Then we have:
\begin{multline}
    \label{eq: P(L|H)}
    P\left(\overline{L} \middle| \overline{H}\right) =
    P\left(L_{l_1} \middle| \overline{H}\right) 
    P\left(L_{l_2} \middle| L_{l_1}, \overline{H}\right) \times \dots \times  \\
    P\left(L_{l_L} \middle| L_{l_1}, \dots, L_{l_{L-1}}, \overline{H}\right)
\end{multline}

The probabilities in \cref{eq: P(L|H)} are calculated using \cref{prop: P(L|H)} .
    \begin{prop}
        \label{prop: P(L|H)}
        The first factor of the \gls{RHS} of \cref{eq: P(L|H)} is given by:
        \begin{equation}
            \label{eq: P(l_1|H)}
            P\left(L_{l_1} \middle| \overline{H}\right) = n^l \hat{p}^l_{l_1} (1-\hat{p}^l_{l_1})^{n^l-1} (1-\hat{p}^h_{l_1})^{N^h}
        \end{equation}
        
        The rest of the factors are given by:
        \begin{equation}
            \label{eq: P(l_k|l_1 ... l_{k-1} H)}
            P\left(L_{l_k} \middle| L_{l_1}, \dots, L_{l_{k-1}}, \overline{H}\right) =
            N^l_k \hat{p}^l_{l_k} (1-\hat{p}^l_{l_k})^{N^l_k-1} (1-\hat{p}^h_{l_k})^{N^h}
        \end{equation}
        
        where $N^h$, $N^l_k$, $\hat{p}^h_{l_k}$, and $\hat{p}^l_{l_k}$ are as follows:
        \begin{align}
        N^h &= n^h - H, \,\,  N^l_k = n^l - (k-1) \\ 
        \label{eq: normalization P(L|H)2}
        \hat{p}^h_{l_k} &= \frac{p^h_{l_k}}{p^h_{l_k}+\ldots+p^h_{L} + \sum_{i \in \mathbf{\mathit{\Phi}}, \mathcal{X}} p^h_{i}} \\
        \label{eq: normalization P(L|H)3}
        \hat{p}^l_{l_k} &= \frac{p^l_{l_k}}{p^l_{l_1}+\ldots+p^l_{L} + \sum_{i \in \mathbf{\mathit{\Phi}}, \mathcal{X}} p^l_{i}}
        \end{align}
    \end{prop}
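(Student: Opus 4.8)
The plan is to reduce the computation of $P\!\left(\overline{L} \mid \overline{H}\right)$ to an instance of \cref{prop: P(H)}, exploiting the symmetry between the two priority classes. The key step is a \emph{conditional reduction lemma}: conditioned on $\overline{H}$, the \gls{UE} placements restricted to the remaining \glspl{RB} $\bar{\mathcal{H}} := \mathcal{L} \cup \mathbf{\mathit{\Phi}} \cup \mathcal{X}$ consist of $N^h = n^h - H$ \glspl{H-UE} placed i.i.d.\ with renormalized probabilities $\tilde{p}^h_i = p^h_i / \sum_{j \in \bar{\mathcal{H}}} p^h_j$, together with all $n^l$ \glspl{L-UE} placed i.i.d.\ with $\tilde{p}^l_i = p^l_i / \sum_{j \in \bar{\mathcal{H}}} p^l_j$, all mutually independent. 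Once this is in hand, the whole computation becomes a relabeled copy of \cref{prop: P(H)}.

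To prove the lemma I would first note that $\overline{H}$ factors into an event $A_H$ involving only \gls{H-UE} placements (exactly one \gls{H-UE} in each \gls{RB} of $\mathcal{H}$) and an event $A_L$ involving only \gls{L-UE} placements (no \gls{L-UE} in any \gls{RB} of $\mathcal{H}$); since the classes are independent a priori, conditioning preserves their independence and I can treat them separately. The \gls{L-UE} part is immediate: $A_L$ asks each \gls{L-UE} independently to avoid $\mathcal{H}$, which renormalizes each \gls{L-UE}'s law to $\tilde{p}^l$ over $\bar{\mathcal{H}}$. The \gls{H-UE} part is the crux and the step I expect to be the main obstacle: the $n^h$ \glspl{H-UE} are multinomial, and $A_H$ pins occupancy $1$ on each \gls{RB} of $\mathcal{H}$. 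Writing the joint probability of $A_H$ together with a fixed occupancy vector $\bm{m}$ on $\bar{\mathcal{H}}$ and dividing by $P(A_H)$, the multinomial coefficient splits as $\frac{n^h!}{\prod_{i\in\bar{\mathcal{H}}} m_i!} = \frac{n^h!}{N^h!}\binom{N^h}{\bm{m}}$, the $\mathcal{H}$-factors cancel, and $P(A_H)=\frac{n^h!}{N^h!}\big(\prod_{i\in\mathcal{H}}p^h_i\big)\big(\sum_{j\in\bar{\mathcal{H}}}p^h_j\big)^{N^h}$ by the multinomial theorem; what survives is exactly a multinomial law with $N^h$ trials and probabilities $\tilde{p}^h$ over $\bar{\mathcal{H}}$, i.e.\ $N^h$ i.i.d.\ renormalized \glspl{H-UE}.

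Given the lemma, $P\!\left(\overline{L} \mid \overline{H}\right)$ is literally an instance of \cref{prop: P(H)} applied to the reduced model under the $h\leftrightarrow l$ relabeling: the \glspl{L-UE} play the ``target'' single occupants (count $n^l$, law $\tilde{p}^l$), the \glspl{H-UE} play the ``background'' class (count $N^h$, law $\tilde{p}^h$), the \gls{RB} universe is $\bar{\mathcal{H}}$, and the target index set is $\mathcal{L}$. \cref{prop: P(H)} then yields $P\!\left(L_{l_1}\mid\overline{H}\right)$ and the general factor $P\!\left(L_{l_k}\mid L_{l_1},\dots,L_{l_{k-1}},\overline{H}\right)$, with the target count decreasing as $N^l_k = n^l-(k-1)$ and the background count staying fixed at $N^h$ because background \glspl{H-UE} are never ``consumed'' by \gls{L-UE}-occupied \glspl{RB}. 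This matches the claimed counts, and the product of these factors reconstitutes \cref{eq: P(L|H)} by the chain rule.

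The one remaining bookkeeping step is to verify that the nested renormalizations compose correctly: \cref{prop: P(H)} would produce probabilities renormalized from $\tilde{p}$ over the shrinking set $\{l_k,\dots,l_L\}\cup\mathbf{\mathit{\Phi}}\cup\mathcal{X}$, whereas the statement writes them renormalized directly from $p$. Since renormalizing $\tilde{p}^h_i = p^h_i / \sum_{j\in\bar{\mathcal{H}}} p^h_j$ over any subset $S \subseteq \bar{\mathcal{H}}$ cancels the common factor $\sum_{j\in\bar{\mathcal{H}}} p^h_j$ and returns $p^h_i / \sum_{j\in S} p^h_j$, the two renormalizations coincide and collapse to precisely $\hat{p}^h_{l_k}$ and $\hat{p}^l_{l_k}$ of \cref{eq: normalization P(L|H)2,eq: normalization P(L|H)3}. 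This ``transitivity of renormalization'' is routine but worth stating, as it is what lets the reduced-model expressions reduce to the clean forms in the statement. An alternative would be to redo the sequential-conditioning induction of \cref{prop: P(H)} directly on the reduced model; I would prefer the reduction route, as it makes the $h\leftrightarrow l$ symmetry explicit and avoids repeating the induction.
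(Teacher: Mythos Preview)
Your argument is correct. The paper does not spell out a proof of this proposition; it simply says the proof ``follows the same procedure as \cref{prop: P(H)}'', meaning one reruns the sequential-conditioning computation of Appendix~\ref{app: P(H) proof} with the roles of the two classes swapped and with the $\mathcal{H}$-\glspl{RB} already removed. Your route is a tidier packaging of the same idea: rather than rerunning the induction, you isolate the effect of conditioning on $\overline{H}$ into a standalone \emph{conditional reduction lemma} (renormalized i.i.d.\ placements of $N^h$ \glspl{H-UE} and $n^l$ \glspl{L-UE} on $\bar{\mathcal{H}}$), and then invoke \cref{prop: P(H)} as a black box under the $h\leftrightarrow l$ relabeling. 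The multinomial computation you give for the \gls{H-UE} part in fact makes rigorous the step the paper's proof of \cref{prop: P(H)} only asserts informally (``RB $h_1$ can no longer be chosen by any other \gls{UE}''). The transitivity-of-renormalization remark is exactly what is needed to collapse the two-stage normalization to the single-stage formulas in the statement; note, incidentally, that the denominator of \cref{eq: normalization P(L|H)3} as printed begins at $l_1$ rather than $l_k$, which appears to be a typo---your derivation produces the $l_k$ version, consistent with \cref{eq: normalization P(L|H)2} and with the analogous \cref{eq: normalization p(H)1,eq: normalization p(H)2}.
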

The proof of \cref{prop: P(L|H)} follows the same procedure as \cref{prop: P(H)} and is omitted due to lack of space.  \cref{alg: P(L|H)} in Appendix~\ref{app: rec algs} shows how $P\left(\overline{L} \middle| \overline{H}\right)$ is calculated.

Moving on to the third factor in \cref{eq: conditional pattern probability}, $P\left(\overline{\Phi} \middle| \overline{H}, \overline{L}\right)$ is the probability that no \glspl{UE} occupy $\mathbf{\mathit{\Phi}}$ given that 
the \glspl{RB} in $\mathcal{H}$ and $\mathcal{L}$ are unavailable.
Let $\mathbf{\mathit{\Phi}} = \{\phi_1, \phi_2, \dots, \phi_\Phi\}$.
Then we have:
\begin{multline}
    \label{eq: P(E|H L)}
    P\left(\overline{\Phi} \middle| \overline{H}, \overline{L}\right) = 
    P\left(\Phi_{\phi_1} \middle| \overline{H}, \overline{L}\right)
    P\left(\Phi_{\phi_2} \middle| \Phi_{\phi_1}, \overline{H}, \overline{L} \right) \times\dots \times \\
    P\left(\Phi_{\phi_\Phi} \middle|\Phi_{\phi_1}, \dots, \Phi_{\phi_{\Phi-1}}, \overline{H}, \overline{L} \right)
\end{multline}

The probabilities in \cref{eq: P(E|H L)}  are calculated using \cref{prop: P(E|H L)} the proof of which is omitted due to lack of space.
    \begin{prop}
        \label{prop: P(E|H L)}
        The first factor of the \gls{RHS} of \cref{eq: P(E|H L)} is given by:
        \begin{equation}
            \label{eq: P(phi_1|H L)}
            P\left(\Phi_{\phi_1} \middle| \overline{H}, \overline{L}\right) =
            (1-\hat{p}^h_{\phi_1})^{N^h} (1-\hat{p}^l_{\phi_1})^{N^l}
        \end{equation}
        The rest of the factors are given by:
        \begin{equation}
            \label{eq: P(phi_k|phi_1 ... phi_{k-1} H L)}
            P\left(\Phi_{\phi_k} \middle| \Phi_{\phi_1}, \dots, \Phi_{\phi_{k-1}}, \overline{H}, \overline{L}\right) = 
            (1-\hat{p}^h_{\phi_k})^{N^h} (1-\hat{p}^l_{\phi_k})^{N^l}
        \end{equation}
        where $N^h$, $N^l$, $\hat{p}^h_{l_k}$, and $\hat{p}^l_{l_k}$ are as follows:
        \begin{align}
            N^h &= n^h - H, 
            \,\, N^l_k = n^l - L \\
            \label{eq: normalization P(E|H L)2}
            \hat{p}^h_{\phi_k} &= \frac{p^h_{\phi_k}}{p^h_{\phi_k}+\ldots+p^h_{\Phi} + \sum_{i \in \mathbf{\mathcal{X}}} p^h_{i}} \\
            \label{eq: normalization P(E|H L)3}
            \hat{p}^l_{\phi_k} &= \frac{p^l_{\phi_k}}{p^l_{\phi_k}+\ldots+p^l_{\Phi} + \sum_{i \in \mathbf{\mathcal{X}}} p^l_{i}}
        \end{align}
    \end{prop}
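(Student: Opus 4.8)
The plan is to prove \cref{prop: P(E|H L)} in the same spirit as \cref{prop: P(H)}: write each conditional factor as a ratio of two joint probabilities and exploit the multinomial structure of the \gls{RB} selection. The one structural feature that must be handled carefully, and that distinguishes this proposition from \cref{prop: P(H)}, is that an \emph{empty} \gls{RB} removes no \glspl{UE} from the pool. Conditioning on $\Phi_{\phi_1},\dots,\Phi_{\phi_{k-1}}$ only shrinks the set of still-available \glspl{RB}, leaving the number of unplaced \glspl{UE} fixed at $N^h=n^h-H$ and $N^l=n^l-L$. This is exactly why the exponents $N^h,N^l$ stay constant across all $\Phi$ factors, whereas in \cref{prop: P(H)} the exponent decreased by one at each step.

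First I would write the joint probability of the prefix event $\overline{H},\overline{L},\Phi_{\phi_1},\dots,\Phi_{\phi_j}$. Because the $n^h$ \gls{H-UE} choices and the $n^l$ \gls{L-UE} choices are mutually independent, this joint probability factors into an \gls{H-UE} part and an \gls{L-UE} part. For the \gls{H-UE} part the imposed constraints are: exactly one \gls{H-UE} in each \gls{RB} of $\mathcal H$, and zero \glspl{H-UE} in $\mathcal L\cup\{\phi_1,\dots,\phi_j\}$, with the remaining $N^h$ \glspl{H-UE} free to fall anywhere else. Summing the multinomial probability over all placements of those $N^h$ free \glspl{UE} collapses, by the multinomial theorem, to a single factor $(S^h_j)^{N^h}$, where $S^h_j=\sum_{i\notin\mathcal H\cup\mathcal L\cup\{\phi_1,\dots,\phi_j\}}p^h_i$ is the \gls{H-UE} probability mass still available; the combinatorial prefactor $\frac{n^h!}{(n^h-H)!}\prod_{i\in\mathcal H}p^h_i$ does not depend on $j$. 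An identical argument gives the \gls{L-UE} part with $S^l_j$ and prefactor $\frac{n^l!}{(n^l-L)!}\prod_{i\in\mathcal L}p^l_i$.

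The conditional factor is then the ratio of this joint probability for $j=k$ to the one for $j=k-1$. All the $j$-independent prefactors cancel, and the ratio telescopes to $\left(S^h_k/S^h_{k-1}\right)^{N^h}\left(S^l_k/S^l_{k-1}\right)^{N^l}$. Since removing $\phi_k$ lowers the available mass by exactly $p^h_{\phi_k}$, we get $S^h_k/S^h_{k-1}=1-p^h_{\phi_k}/S^h_{k-1}=1-\hat p^h_{\phi_k}$, because the denominator in the definition of $\hat p^h_{\phi_k}$ is precisely $S^h_{k-1}=p^h_{\phi_k}+\dots+p^h_{\phi_\Phi}+\sum_{i\in\mathcal X}p^h_i$; likewise for the \gls{L-UE} term. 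This yields \cref{eq: P(phi_k|phi_1 ... phi_{k-1} H L)}, and the base case \cref{eq: P(phi_1|H L)} follows by taking the ratio against $P(\overline H,\overline L)$, i.e.\ the $j=0$ joint probability.

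I expect the only real obstacle to be the bookkeeping just described: keeping the two effects of conditioning separate — empties remove \glspl{RB} from the normalization but not \glspl{UE} from the pool — and verifying that the shrinking denominator $S^h_{k-1}$ coincides exactly with the denominator in the stated $\hat p^h_{\phi_k}$. Once that correspondence is pinned down, the multinomial collapse and the telescoping ratio are routine.
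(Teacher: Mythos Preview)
The paper omits the proof of this proposition, stating only that it follows the same procedure as \cref{prop: P(H)}. Your argument is correct and complete.

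Relative to the paper's template (the proof of \cref{prop: P(H)} in Appendix~\ref{app: P(H) proof}), you take a slightly different route. The paper works one conditional factor at a time: it asserts that once an \gls{RB} has been accounted for it ``can no longer be chosen by any other \gls{UE}'' and then applies \cref{lemma: normalization} to renormalize the remaining selection probabilities. You instead compute the full joint probability $P(\overline{H},\overline{L},\Phi_{\phi_1},\dots,\Phi_{\phi_j})$ explicitly via the multinomial theorem, obtain a closed form proportional to $(S^h_j)^{N^h}(S^l_j)^{N^l}$, and form the ratio of consecutive prefixes. Your approach makes the telescoping structure transparent and replaces the paper's somewhat heuristic ``\gls{RB} becomes unavailable'' step with a direct computation; the paper's approach is shorter and leans on \cref{lemma: normalization} to hide the same algebra. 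Both reach the same normalized probabilities $\hat p^h_{\phi_k},\hat p^l_{\phi_k}$, and the structural point you single out---that conditioning on an empty \gls{RB} shrinks the normalization mass without reducing the \gls{UE} counts, hence the exponents stay fixed at $N^h,N^l$---is exactly what distinguishes this proposition from \cref{prop: P(H)} and \cref{prop: P(L|H)}.
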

The algorithm that calculates the third factor in \cref{eq: conditional pattern probability} is similar to \cref{alg: P(L|H)}, hence it has been omitted for brevity.

Finally, the last factor in \cref{eq: conditional pattern probability}, $P\left(\overline{X} \middle| \overline{H}; \overline{L} \; \overline{\Phi}\right)$, is the probability that two or more \glspl{UE} occupy each \gls{RB} in $\mathcal{X}$.
Let $\mathcal{X}=\{x_1, x_2, \dots, x_X\}$.
\cref{prop: P(X|H L E)} depicts how $P\left(\overline{X} \middle| \overline{H}, \overline{L}, \overline{\Phi}\right)$ is calculated.
    \begin{prop}
        \label{prop: P(X|H L E)}
        $P\left(\overline{X} \middle| \overline{H}, \overline{L}, \overline{\Phi}\right)$ is given by:
        \begin{subequations}
            \begin{multline}
                \label{eq: P(X RBs|H L E)}
                P\left(\overline{X} \middle| \overline{H}, \overline{L}, \overline{\Phi}\right) = \\
                \sum_{k_1=2}^{N^{h}_{1} + N^{l}_{1}} \cdots \sum_{k_X=2}^{N^{h}_{X} + N^{l}_{X}} \sum_{i_1=0}^{N^{h}_{1}} \cdots \sum_{i_X=0}^{N^{h}_{X}} \\
                \prod_{j=1}^{X} \binom{N^h}{i_j} (\hat{p}^h_{x_j})^{i_j} (1-\hat{p}^h_{x_j})^{N^h-i_j} \times \\\binom{N^l}{k_j-i_j} (\hat{p}^l_{x_j})^{k_j-i_j} (1-\hat{p}^l_{x_j})^{N^l-k_j+i_j}
            \end{multline}
            \begin{flalign}
                \hfill && \textrm{s.t.} \quad &k_m-i_m \geq 0, \quad m \in \mathcal{X}\\
                \hfill && &\sum_{m=1}^X{i_m} = N^h, \qquad \sum_{m=1}^X{k_m} = N^l - N^h
            \end{flalign}
        \end{subequations}
        where $N^{h}_{j}$, $N^{l}_{j}$, $\hat{p}^{h}_{x_j}$ and $\hat{p}^{l}_{x_j}$ are as follows:
        \begin{align}
            \label{eq: normalization P(X|H L E)0}
            N^{h}_{j} &= n^h - H - i_{j-1} - (j-1) \text{, with } i_0=0 \\
            \label{eq: normalization P(X|H L E)1}
            N^{l}_{j} &= n^l - L - (k_{j-1}-i_{j-1}) - (j-1) \text{, with } k_0=i_0=0 \\
            \hat{p}^h_{x_j} &= \frac{p^h_{x_j}}{p^h_{x_j}+\ldots+p^h_{X}}, 
            \,\, \hat{p}^l_{x_j} = \frac{p^l_{x_j}}{p^l_{x_j}+\ldots+p^l_{X}}
        \end{align}      
    \end{prop}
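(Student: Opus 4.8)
The plan is to mirror the sequential ``peeling'' argument used for \cref{prop: P(H)}, but now the combinatorics is richer: each collision \gls{RB} may hold an arbitrary number ($\geq 2$) of \glspl{UE} of either priority, so in place of a single-occupancy factor we must sum over all admissible occupancies of the \glspl{RB} in $\mathcal{X}$.

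First I would exploit the conditioning. Given $\overline{H}, \overline{L}, \overline{\Phi}$, the \glspl{RB} indexed by $\mathcal{H}, \mathcal{L}$ and $\mathbf{\mathit{\Phi}}$ are fully accounted for: each index in $\mathcal{H}$ (resp. $\mathcal{L}$) holds exactly one \gls{H-UE} (resp. \gls{L-UE}) and nothing else, while every index in $\mathbf{\mathit{\Phi}}$ is empty. Hence none of the remaining $N^h = n^h - H$ \glspl{H-UE} or $N^l = n^l - L$ \glspl{L-UE} can lie in $\mathcal{H} \cup \mathcal{L} \cup \mathbf{\mathit{\Phi}}$; they are forced into the collision set $\mathcal{X}$. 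Exactly as in \cref{prop: P(H)}, conditioning a \gls{UE}'s choice on avoiding a fixed set of \glspl{RB} merely renormalizes its selection distribution over the surviving \glspl{RB}, which is how $\hat{p}^h_{x_j}$ and $\hat{p}^l_{x_j}$ (normalized over $\{x_j, \dots, x_X\}$) arise.

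Next I would reduce the event $\overline{X}$ to a balls-in-bins count. Since the remaining \glspl{UE} land in $\mathcal{X}$ independently, the numbers of \glspl{H-UE} and \glspl{L-UE} falling into the collision \glspl{RB} follow two independent multinomial laws over $\mathcal{X}$. Writing $i_j$ for the number of \glspl{H-UE} and $k_j - i_j$ for the number of \glspl{L-UE} in $x_j$, the event $\overline{X}$ holds iff every bin receives at least two \glspl{UE} ($k_j \geq 2$) and all remaining \glspl{UE} are placed ($\sum_j i_j = N^h$ and $\sum_j (k_j - i_j) = N^l$). The desired probability is therefore the sum over all admissible $(i_j, k_j)$ of the joint multinomial weight. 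To recover the stated product form I would expand each multinomial coefficient telescopically as a product of binomials, $\binom{N}{i_1, \dots, i_X} = \prod_j \binom{N - i_1 - \cdots - i_{j-1}}{i_j}$, which pairs naturally with the sequential renormalization of $\hat{p}^h_{x_j}, \hat{p}^l_{x_j}$ and produces the recursively defined residual counts $N^h_j, N^l_j$ of \cref{eq: normalization P(X|H L E)0,eq: normalization P(X|H L E)1}. Summing the per-bin binomial factors over the occupancy ranges $k_j \in \{2, \dots, N^h_j + N^l_j\}$ and $i_j \in \{0, \dots, N^h_j\}$, subject to the placement constraints, then yields \cref{eq: P(X RBs|H L E)}.

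The hard part is the combinatorial bookkeeping rather than any single probabilistic step. I expect the main obstacle to be verifying that the nested summations, with their bin-dependent upper limits and the recursive updates of $N^h_j$ and $N^l_j$, enumerate each admissible global configuration exactly once while simultaneously enforcing the collision constraint ($k_j \geq 2$) and the conservation constraints ($\sum_j i_j = N^h$, $\sum_j (k_j - i_j) = N^l$). Equivalently, one must check that the telescoping binomial expansion reproduces the underlying product of two multinomials without double counting and that the sequential renormalizations $\hat{p}^h_{x_j}, \hat{p}^l_{x_j}$ compose correctly across the $X$ bins; once this equivalence is in place, the per-\gls{RB} binomial weights follow from the same independence argument already used in \cref{prop: P(H)}.
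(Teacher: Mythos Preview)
The paper explicitly omits the proof of \cref{prop: P(X|H L E)} for lack of space, so there is no detailed argument to compare against. Your proposal follows the same sequential conditioning-and-renormalization template the paper does spell out for \cref{prop: P(H)} (Appendix~\ref{app: P(H) proof}), extended in the natural way: reduce to the remaining $N^h$ \glspl{H-UE} and $N^l$ \glspl{L-UE} confined to $\mathcal{X}$, view the joint placement as two independent multinomials, rewrite each multinomial as a telescoping product of binomials with sequentially renormalized probabilities $\hat{p}^h_{x_j}, \hat{p}^l_{x_j}$, and sum over all occupancy profiles satisfying $k_j\geq 2$ and the conservation constraints. This is exactly the structure encoded in \cref{alg: P(X|H L E) recursive}, so your plan is consistent with the paper's intended derivation.
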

The proof of \cref{prop: P(X|H L E)} is omitted due to lack of space.
\cref{alg: P(X|H L E) recursive} in Appendix~\ref{app: rec algs} shows how $P\left(\overline{X} \middle| \overline{H}, \overline{L}, \overline{\Phi}\right)$ is calculated.

\subsection{Reduced Complexity Maximum Likelihood Estimation}
\label{subsec: RCMLe}
The \gls{ML} estimator in \cref{eq: load estimator} is computationally complex to implement.
This is mainly due to $P\left(\overline{X} \middle| \overline{H}, \overline{L}, \overline{\Phi}\right)$ in \cref{eq: conditional pattern probability} having a large number of combinations for certain values of $\mathcal{X}$, $N^{h}_{ 1}$, and $N^{l}_{ 1}$.
Thus, we propose an \gls{RCML} estimator that discards $P\left(\overline{X} \middle| \overline{H}, \overline{L}, \overline{\Phi}\right)$ and relies on the remaining factors in \cref{eq: load estimator} to estimate the network load.
In other words, the probability of an access pattern $\bm{\pi}$ is approximated as follows:
\begin{equation}
    \label{eq: suboptimal conditional pattern probability}
    P(\bm{\pi}) \approx \hat{P}(\bm{\pi}) = P\left(\overline{H}, \overline{L}, \overline{\Phi}\right) = 
    P\left(\overline{H}\right) 
    P\left(\overline{L} \middle| \overline{H}\right)
    P\left(\overline{\Phi} \middle| \overline{H}, \overline{L}\right)
\end{equation}

The \gls{RCML} estimator is then obtained by \cref{eq: load estimator} with $\hat{P}(\bm{\pi})$ in place of $P(\bm{\pi})$.

\section{Simulation Results}
\label{sec: simulation}
In this section, we evaluate the performance of the proposed estimators in terms of the \gls{MAE} for varying numbers of \glspl{UE} and the number of patterns observed.
We consider three simulation setups.
In the first setup, all \glspl{UE} are assumed to be in the same priority class; hence, $\bm{p}^h = \bm{p}^l$.
This scenario replicates the \gls{RACH} in LTE and the contention-based \gls{RACH} procedure in 5G, with the \glspl{RB} representing preambles in the random access procedure~\cite{polese2016m2m, morais20245g}.
In the second setup, the \glspl{UE} are split into two priority classes, with the \glspl{H-UE} having more \glspl{RB} available than the \glspl{L-UE}. This setup corresponds to virtually allocating the \glspl{RB} to the \glspl{UE} depending on their priority class~\cite{cheng2011prioritized, lin2014prada}.
Finally, the third setup utilizes non-uniform \gls{RB} access probabilities introducing a generalized approach to the \gls{RACH} protocol, thereby opening up opportunities for integrating different priority classes more effectively.
The results presented are averages of 50 Monte Carlo simulations.
In all simulations, the number of \glspl{RB} is set to $M=6$, $T$ is set to 1, 3, or 10, and the number of \glspl{L-UE} ranges from 0 to 7.

\subsection{Single Priority Class}
\label{subsec: simulation 1}
In the first simulation, $n^h$ is fixed to 2, while the elements in $\bm{p}^h$ and $\bm{p}^l$ are set to $1/M$, indicating that all \glspl{RB} are equally likely to be chosen by any \gls{UE}.
The \gls{MAE} of the \gls{ML} and \gls{RCML} estimators against $n^l$ are shown in \cref{fig: MAE vs. nl}.
As we can see, on average, the \gls{ML} estimator (\cref{fig: exact 1}) outperforms the \gls{RCML} estimator (\cref{fig: imperfect 1}) due to the \gls{RCML} not using all the available pattern information.
As expected, the \gls{MAE} decreases as $T$ increases due to the extra information available to the \gls{BS}.
Furthermore, for a fixed $T$, an increase in the overloading factor leads to a higher number of collisions resulting in a higher \gls{MAE}.
A high collision number in the patterns results in less useful information for the estimators, leading to more errors.
To further drive this point, consider a pattern consisting of only collisions; we cannot distinguish between the \glspl{H-UE} and \glspl{L-UE}.
On the other hand, if a pattern consists of only non-collisions, we can exactly determine the number of \glspl{H-UE} and \glspl{L-UE}.
\setlength{\textfloatsep}{0pt}
\setlength{\floatsep}{0pt}
\setlength{\intextsep}{2.5pt}
\begin{figure}
    \centering
    \subfloat[]{
            \includegraphics{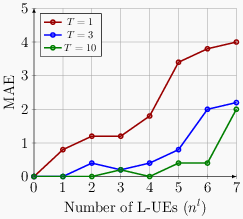}
            \label{fig: exact 1}
    }
    \subfloat[]{
        \includegraphics{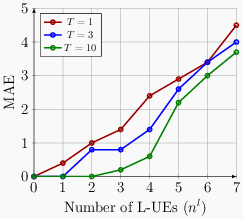}
        \label{fig: imperfect 1}
    }
    \caption{\gls{MAE} vs. $n^l$ for \gls{ML} (\ref{fig: exact 1}) and \gls{RCML} (\ref{fig: imperfect 1}) estimators for different $T$.}
    \label{fig: MAE vs. nl}
\end{figure}

Finally, to illustrate the estimators' operation, \cref{fig: hm} shows a heatmap of the likelihood function for $n^h=2$, $n^l=4$, and $T=1$.
The heatmap shows that the likelihood function is maximized for the above parameters when $\hat{n}^h=2$ and $\hat{n}^l=4$, which are the true number of \glspl{H-UE} and \glspl{L-UE}.
It is noted that the $T=1$ scenario is similar to other load estimation work that utilize one \gls{RACH} slot in their formulations~\cite{sun2018detecting, tello2018efficient}.
\setlength{\textfloatsep}{0pt}
\setlength{\floatsep}{0pt}
\setlength{\intextsep}{2.5pt}
\begin{figure}
    \centering
    \includegraphics{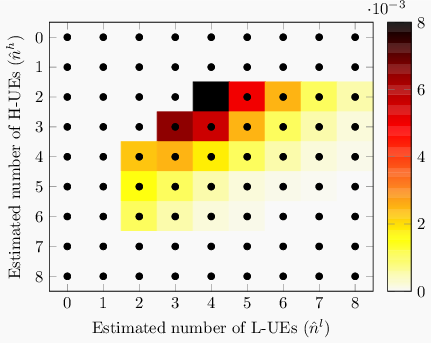}
    \caption{Likelihood heatmap for $n^h=2$, $n^l=4$, and $T=1$.}
    \label{fig: hm}
\end{figure}

\subsection{Two Priority Classes}
\label{subsec: simulation 2}
In the second simulation, $n^h$ is set to 1 or 2.
Furthermore, the elements in $\bm{p}^h$ are fixed to $1/M$, while $\bm{p}^l = [1/3, 1/3, 1/3, 0, 0, 0]$, indicating that the \glspl{H-UE} have twice as many \glspl{RB} available as the \glspl{L-UE}.
\cref{fig: MAE vs. nl 2} shows the \gls{MAE} against $n^l$ for different values of $n^h$ and $T$ for both estimators.
For a fixed $n^h$, the \gls{MAE} decreases as $T$ increases.
On the other hand, for a fixed $T$, the \gls{MAE} increases as $n^h$ increases since the overloading factor increases.
\setlength{\textfloatsep}{0pt}
\setlength{\floatsep}{0pt}
\setlength{\intextsep}{2.5pt}
\begin{figure}
    \centering
    \subfloat[]{
        \includegraphics{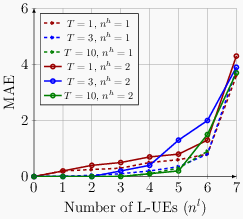}
        \label{fig: exact 2}
    }
    \subfloat[]{
        \includegraphics{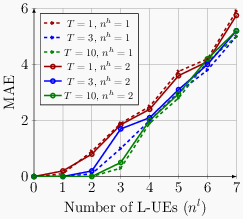}
        \label{fig: Imperfect 2}
    }
    \caption{\gls{MAE} vs. $n^l$ for \gls{ML} (\ref{fig: exact 2}) and \gls{RCML} (\ref{fig: Imperfect 2}) estimators for different $T$ and $n^h$.}
    \label{fig: MAE vs. nl 2}
\end{figure}

\subsection{Two Priority Classes with Non-Uniform RB Allocation}
\label{subsec: simulation 3}
In the third simulation, we set $\bm{p}^h = [1/12, 1/12, 2/12, 2/12, 3/12, 3/12]$ and $\bm{p}^l = [4/12, 3/12, 2/12, 1/12, 1/12, 1/12]$.
\cref{fig: MAE vs. nl 3} showcases the \gls{MAE} for the estimators, which follow a similar trend to the results of \cref{subsec: simulation 2}.
\setlength{\textfloatsep}{2.5pt}
\setlength{\floatsep}{2.5pt}
\setlength{\intextsep}{2.5pt}
\begin{figure}
    \centering
    \subfloat[]{
        \includegraphics{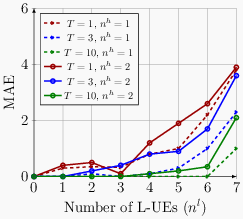}
        \label{fig: exact 3}
    }
    \subfloat[]{
        \includegraphics{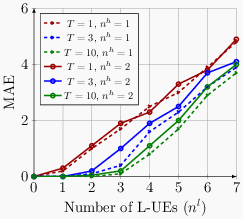}
        \label{fig: Imperfect 3}
    }
    \caption{\gls{MAE} vs. $n^l$ for \gls{ML} (\ref{fig: exact 3}) and \gls{RCML} (\ref{fig: Imperfect 3}) estimators for different $T$ and $n^h$ using non-uniform $\bm{p}^h$ and $\bm{p}^l$.}
    \label{fig: MAE vs. nl 3}
\end{figure}
We note that for a relatively low $n^l$, the \gls{RCML} estimator performs similarly compared to the \gls{ML} estimator but starts to diverge as $n^l$ increases.
Hence the last factor in \cref{eq: conditional pattern probability} becomes more significant as the network load increases.
Additionally, the \gls{RCML} estimator is significantly faster than the \gls{ML} estimator; on average, across all three simulations, the \gls{RCML} estimator was 46 times faster than the \gls{ML} estimator.
Furthermore, there is a tradeoff between the estimators' accuracy and the time taken to estimate the load; the larger $T$ is, the higher the accuracy.

Finally, comparing the results in \cref{fig: MAE vs. nl 2} and \cref{fig: MAE vs. nl 3} we see that utilizing non-uniform \gls{RB} selection probabilities leads to increased load estimation accuracy.
It is important to note that the computational complexity in both cases is the same.

\section{Conclusion}
\label{sec: conclusion}
In this paper, we proposed two network load estimators for a two-priority \gls{mMTC} \gls{RACH} that operate by observing multiple \gls{RACH} slots.
First, we modelled the \gls{RACH} slots by \gls{RB} access patterns observed by the \gls{BS}.
Then, we formulated the likelihood of the number of high and low priority \glspl{UE} given the observed access patterns.
Based on the formulated likelihood function, we developed a \gls{ML} estimator and a \gls{RCML} estimator, where the latter approximated the likelihood function by discarding a portion of the pattern information.
Finally, we conducted numerical simulations to evaluate the performance of the proposed estimators.
The results showed that utilizing non-uniform \gls{RB} selection probabilities may lead to increased load estimation accuracy.
Importantly, these accuracy gains come at no additional computational cost.
Additionally, for a low overloading factor, we found that the \gls{RCML} estimator offered comparable performance to its \gls{ML} counterpart at significant reduction of computational cost.

Our work can be further extended along several dimensions.
For example, by considering the arrival rate of \glspl{UE} in each priority class, or by considering more than two priority classes.
Moreover, we aim to expand the system model to more realistically represent bursty mMTC traffic by adopting the traffic model developed by~\cite{3GPP}.

\appendices
\setlength{\abovedisplayskip}{3pt}
\setlength{\belowdisplayskip}{3pt}
\section{Proof of \texorpdfstring{\cref{prop: P(H)}}{Lg}}
\label{app: P(H) proof}

    The upcoming derivations of the probabilities in \cref{eq: conditional pattern probability} require the use of the following lemma.
    \begin{lemma}
        \label{lemma: normalization}
        Consider the mutually exclusive and exhaustive events $A_1, \dots, A_n$ with the corresponding probabilities $\pi_1, \dots, \pi_n$.
        Given that the event $A_k$ does not occur, 
        the probability of each remaining event $A_i$, $i\not= k$, is given by:
        \begin{equation}
            \label{eq: normalization}
            \hat{\pi}_i = \frac{\pi_i}{\sum_{\substack{j=1\\ j \neq k}}^{n} \pi_j}, \quad \forall i \in \{1, \dots, n\} \setminus \{k\}
        \end{equation}
    \end{lemma}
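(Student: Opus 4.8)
The plan is to apply the definition of conditional probability directly. Writing $\overline{A_k}$ for the event that $A_k$ does not occur, the quantity of interest is $\hat{\pi}_i = P(A_i \mid \overline{A_k})$, which by definition equals $P(A_i \cap \overline{A_k}) / P(\overline{A_k})$, provided $P(\overline{A_k}) > 0$ (equivalently $\pi_k < 1$). I would note this non-degeneracy as the regularity condition under which the conditioning is meaningful, then treat the numerator and denominator separately, since each of the two structural hypotheses handles exactly one of them.

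For the numerator, mutual exclusivity of the $A_j$ implies that for $i \neq k$ the event $A_i$ can never co-occur with $A_k$, so $A_i \subseteq \overline{A_k}$ and hence $A_i \cap \overline{A_k} = A_i$, giving $P(A_i \cap \overline{A_k}) = \pi_i$. For the denominator, since the events are both mutually exclusive and exhaustive they form a partition of the sample space, so $\sum_{j=1}^n \pi_j = 1$; therefore $P(\overline{A_k}) = 1 - \pi_k = \sum_{j \neq k} \pi_j$. Substituting these two expressions into the ratio yields $\hat{\pi}_i = \pi_i / \sum_{j \neq k} \pi_j$, which is exactly \cref{eq: normalization}.

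There is no genuine obstacle here: the statement is essentially a restatement of Bayes' rule for a partition, so the proof is mechanical. The only points I would make explicit are that mutual exclusivity is precisely what collapses the joint probability in the numerator to $\pi_i$, and that exhaustiveness is precisely what lets the normalizing constant be written as the sum of the surviving probabilities $\sum_{j \neq k}\pi_j$ (rather than merely as $1-\pi_k$), since it is this summation form that is subsequently invoked in the renormalized selection probabilities $\hat{p}^h_{h_k}$ and $\hat{p}^l_{h_k}$ appearing in \cref{prop: P(H)}.
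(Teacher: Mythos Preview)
Your proof is correct and follows essentially the same route as the paper: apply the definition of conditional probability, use mutual exclusivity to reduce $P(A_i \cap \overline{A_k})$ to $\pi_i$, and use exhaustiveness to rewrite $P(\overline{A_k}) = 1 - \pi_k$ as $\sum_{j \neq k} \pi_j$. Your version is in fact slightly more careful than the paper's, since you make explicit the non-degeneracy condition $\pi_k < 1$ and identify which hypothesis drives which step.
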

    \begin{proof}
        Let $\mathcal{A} = \{A_1, A_2, \dots, A_n\}$. 
        Given that event $A_k$ does not occur, the conditional probability of each remaining event $A_i \in \mathcal{A} \setminus \{A_k\}$ is given by:
        \begin{equation}
        P(A_i | A^c_k) = \frac{P(A_i \cap A^c_k)}{P(A^c_k)}
        \end{equation}
        where $A^c_k$ is the complement of $A_k$ with $P(A^c_k) = 1 - \pi_k = \sum_{i=1, i \neq k}^{M} \pi_i$. 
        The intersection of each remaining event $A_i$ with $A^c_k$ is the event $A_i$ itself since all remaining events are contained within $A^c_k$:
        $$P(A_i \cap A^c_k) = P(A_i).$$
        Hence, the probability of each remaining event $A_i$ given event $A_k$ does not occur is:
        \begin{multline}
            \hat{\pi}_i = P(A_i | A^c_k) = P(A_i \cap A^c_k) / P(A^c_k)\\
            = P(A_i) / P(A^c_k) = \pi_i / \sum_{\substack{i=1\\ i \neq k}}^{n} \pi_i, \quad \forall A_i \in \mathcal{A} \setminus \{A_k\}
        \end{multline}
        such that $\sum_{i=1}^{n} \hat{\pi}_i = 1$.
    \end{proof}
    We can now begin the proof of \cref{prop: P(H)}, as shown below.
        The first factor in \cref{eq: P(H)} is the probability that only one \gls{H-UE} occupies RB $h_1$, and no other \glspl{UE} are in RB $h_1$:
        \begin{equation}
            \label{eq: P(h_1) proof}
            P(H_{h_1}) =
            \binom{n^h}{1} p^h_{h_1} (1-p^h_{h_1})^{n^h-1} \binom{n^l}{0} (p^l_{h_1})^{0} (1-p^l_{h_1})^{n^l},
        \end{equation}
        which when simplified results in \mbox{\cref{eq: P(h_1)}}. 
        The second factor in \cref{eq: P(H)} is the probability that only one \gls{H-UE} occupies RB $h_2$, and no other \glspl{UE} are in RB $h_2$, given that one \gls{H-UE} is in RB $h_1$.
        Thus RB $h_1$ can no longer be chosen by any other \gls{UE}, i.e., $p^h_{h_1}=p^l_{h_1}=0$; we have:
    \begin{multline}
        \label{eq: P(h_2|h_1)}
        P(H_{h_2}|H_{h_1}) =
        \binom{N^{h}_{2}}{1} \hat{p}^h_{h_2} (1-\hat{p}^h_{h_2})^{N^{h}_{2}-1} \times \\
        \binom{n^l}{0} (\hat{p}^l_{h_2})^{0} (1-\hat{p}^l_{h_2})^{n^l}
        = N^{h}_{2} \hat{p}^h_{h_2} (1-\hat{p}^h_{h_2})^{N^{h}_{2}-1} (1-\hat{p}^l_{h_2})^{n^l}
    \end{multline}
    where from \mbox{\cref{eq: normalization P(H)0}}, $N^{h}_{2} = n^h - 1$ is the remaining \glspl{H-UE}.
    $\hat{p}^h_{h_2}$ and $\hat{p}^l_{h_2}$ are the probabilities that an \gls{H-UE} and an \gls{L-UE} can occupy RB $h_2$ given that RB $h_1$  is unavailable, which are given by:
    \begin{align}
        \hat{p}^h_{h_2} = \frac{p^h_{h_2}}{\sum_{\substack{i=1 \\ i \not= h_1}}^M p^h_{i}}, && \hat{p}^l_{h_2} = \frac{p^l_{h_2}}{\sum_{\substack{i=1 \\ i \not= h_1}}^M p^l_{i}}
    \end{align}
    In general, the probability that an \gls{H-UE} occupies RB $h_k$ given that RBs $h_1, \dots, h_{k-1}$ are unavailable is:
    \begin{multline}
        \label{eq: P(h_k|h_1 ... h_{k-1}) proof}
        P(H_{h_k}|H_{h_1}, \dots, H_{h_{k-1}}) =
        \binom{N^{h}_{k}}{1} \hat{p}^h_{h_k} (1-\hat{p}^h_{h_k})^{N^{h}_{k}-1} \times \\
        \binom{n^l}{0} (\hat{p}^l_{h_k})^{0} (1-\hat{p}^l_{h_k})^{n^l}
    \end{multline}
    where $N^{h}_{k}$ is obtained from \mbox{\cref{eq: normalization P(H)0}}.
    While $\hat{p}^h_{h_k}$ and $\hat{p}^l_{h_k}$ are the probabilities that an \gls{H-UE} and an \gls{L-UE} can occupy RB $h_k$ given that RBs $h_1$, \dots, $h_{k-1}$ are unavailable, which are calculated using \mbox{\cref{eq: normalization p(H)1}} and \mbox{\cref{eq: normalization p(H)2}}, respectively.
    Simplifying \mbox{\cref{eq: P(h_k|h_1 ... h_{k-1}) proof}}, results in \mbox{\cref{eq: P(h_k|h_1 ... h_{k-1})}}, which completes the proof.

\section{Recursive Algorithms}
\label{app: rec algs}
\cref{alg: P(H),alg: P(L|H),alg: P(X|H L E) recursive} present a recursive procedure to calculate $P\left(\overline{H}\right)$, $P\left(\overline{L} \middle| \overline{H}\right)$ and $P\left(\overline{X} \middle| \overline{H}, \overline{L}, \overline{\Phi}\right)$, respectively.

Notice that in \cref{alg: P(X|H L E) recursive} we store values for all the possible $N^{h}_{j}$, $N^{l}_{j}$, $\hat{p}^{h}_{X_j}$, and $\hat{p}^{l}_{X_j}$; instead, only the values for the current iteration are calculated then overwritten with the values of the next iteration.
Additionally, we use two helper functions, \textsc{FindCombinations} and \textsc{FindCombinationsH}, which find all feasible combinations of the total number of \glspl{UE} and \glspl{H-UE} that can occupy the \glspl{RB} in $\mathcal{X}$, respectively.

\setlength{\textfloatsep}{0pt}
\setlength{\floatsep}{0pt}
\setlength{\intextsep}{2.5pt}
\begin{algorithm}
    \small
    \caption{$P\left(\overline{H}\right)$}
    \label{alg: P(H)}
    \begin{algorithmic}[1]
        \State Initialization: $z = 1$, $N^h=n^h$, $\hat{\bm{p}}^h=\bm{p}^h$, and $\mathbf{\hat{p}}^l=\bm{p}^l$ \label{alg: P(H) initialization}
        \For{each $i$ in $\mathcal{H}$} \label{alg: P(H) for loop}
            \State $z \gets z \times N^h \hat{p}^h_{i} (1-p^h_{i})^{N^h-1} (1-\hat{p}^l_{i})^{n^l}$ \label{alg: P(H) z update}
            \State $N^h \gets N^h - 1$ \label{alg: P(H) N^h update}
            \State $\hat{p}^h_{i} \gets 0, \qquad \hat{p}^l_{i} \gets 0$ \label{alg: P(H) p^h update}
            \State $\hat{p}^h_{j} \gets \frac{\hat{p}^h_{j}}{\sum_{\substack{k=1\\ k \neq i}}^M \hat{p}^h_{k}}, \quad \hat{p}^l_{j} \gets \frac{\hat{p}^l_{j}}{\sum_{\substack{k=1 \\ k \neq i}}^M \hat{p}^l_{k}} \text{ for } j \neq i$ \label{alg: P(H) p^h update2}
        \EndFor
        \State At the end of the $i$th iteration $z$ holds $P\left(\overline{H}\right)$.
    \end{algorithmic}
\end{algorithm}

\setlength{\textfloatsep}{0pt}
\setlength{\floatsep}{0pt}
\setlength{\intextsep}{1pt}

\begin{algorithm}
    \small
    \caption{$P\left(\overline{L} \middle| \overline{H}\right)$}
    \label{alg: P(L|H)}
    \begin{algorithmic}[1]
        \State Initialization: $z=1$, $N^h=n^h-H$, $N^l=n^l$ \label{alg: P(L|H) initialization}
        \State $\hat{p}^h_{j} = \begin{cases}
            0 & \text{if } j \in \mathcal{H} \\
            \frac{p^h_{j}}{\sum_{\substack{i\notin \mathcal{H}}} p^h_{i}} & \text{otherwise}
        \end{cases}$ \label{alg: P(L|H) p^h initialization}
        \State $\hat{p}^l_{j} = \begin{cases}
            0 & \text{if } j \in \mathcal{H} \\
            \frac{p^l_{j}}{\sum_{\substack{i\notin \mathcal{H}}} p^l_{i}} & \text{otherwise}
        \end{cases}$ \label{alg: P(L|H) p^l initialization}
        \For{each $i$ in $\mathcal{L}$} \label{alg: P(L|H) for loop}
            \State $z \gets z \times N^l \hat{p}^l_{i} (1-\hat{p}^l_{i})^{N^l-1} (1-\hat{p}^h_{i})^{N^h}$ \label{alg: P(L|H) z update}
            \State $N^l \gets N^l - 1$ \label{alg: P(L|H) N^l update}
            \State $\hat{p}^h_{i} \gets 0, \qquad \hat{p}^l_{i} \gets 0$ \label{alg: P(L|H) p^h update}
            \State $\hat{p}^h_{j} \gets \frac{\hat{p}^h_{j}}{\sum_{\substack{k=1\\ k \neq i}}^M \hat{p}^h_{k}}, \quad \hat{p}^l_{j} \gets \frac{\hat{p}^l_{j}}{\sum_{\substack{k=1 \\ k \neq i}}^M \hat{p}^l_{k}} \text{ for } j \neq i$ \label{alg: P(L|H) p^h update2}
        \EndFor
        \State At the end of the $i$th iteration $z$ holds $P\left(\overline{L} \middle| \overline{H}\right)$.
    \end{algorithmic}
\end{algorithm}

\setlength{\textfloatsep}{0pt}
\setlength{\floatsep}{0pt}
\setlength{\intextsep}{1pt}
\begin{algorithm}
    \small
    \caption{$P\left(\overline{X} \middle| \overline{H}, \overline{L}, \overline{\Phi}\right)$}
    \label{alg: P(X|H L E) recursive}
    \begin{algorithmic}[1]
        \State Initialization: $z=0$, $N^{h}=n^h-H$, $N^{l}=n^l-L$, $N = N^{h} + N^{l}$ \label{alg: P(X|H L E) initialization}
        \State $\hat{p}^h_{j} = \begin{cases}
            0 & \text{if } j \in \mathcal{H}, \mathcal{L}, \mathbf{\mathit{\Phi}} \\
            \frac{p^h_{j}}{\sum_{\substack{i\in \mathcal{X}} p^h_{i}}} & \text{otherwise}
        \end{cases}$ \label{alg: P(X|H L E) p^h initialization}
        \State $\hat{p}^l_{j} = \begin{cases}
            0 & \text{if } j \in \mathcal{H}, \mathcal{L}, \mathbf{\mathit{\Phi}} \\
            \frac{p^l_{j}}{\sum_{\substack{i\in \mathcal{X}}} p^l_{i}} & \text{otherwise}
        \end{cases}$  \label{alg: P(X|H L E) p^l initialization}
        \State $\mathcal{K} \gets \textsc{FindCombinations}(X, N, \text{empty list})$ \label{alg: P(X|H L E) find combinations}
        \For{each $\mathbf{k}$ in $\mathcal{K}$} \label{alg: P(X|H L E) for loop k}
            \State $\mathcal{I} \gets \textsc{FindCombinationsH}(X, N^{h}, \mathbf{k}, \text{empty list})$ \label{alg: P(X|H L E) find combinations H}
            \For{each $\mathbf{i}$ in $\mathcal{N}$} \label{alg: P(X|H L E) for loop i}
                \State Initialization: $z^{\prime}=1$, $p=1$, $N^{h}$, $N^{l}$, $\hat{p}^h_{j}$, $\hat{p}^l_{j}$\label{alg: P(X|H L E) initialization 2}
                \For{ each $j$ in $\mathcal{X}$} \label{alg: P(X|H L E) for loop j}
                    \State $z^{\prime} \gets z^{\prime} \times 
                    \binom{N^h}{i_p} (\hat{p}^h_{j})^{i_p} (1-\hat{p}^h_{j})^{N^h-i_p} \times \binom{N^l}{k_p-i_p} (\hat{p}^l_{j})^{k_p-i_p} (1-\hat{p}^l_{j})^{N^l-k_p+i_p}$ \label{alg: P(X|H L E) z' update}
                    \State $N^h \gets N^h-i_p, \qquad N^l \gets N^l-(k_p-i_p)$ \label{alg: P(X|H L E) N^h update}
                    \State $\hat{p}^h_{j} \gets 0, \qquad \hat{p}^l_{j} \gets 0$ \label{alg: P(X|H L E) p^h update}
                    \State $\hat{p}^h_{m} \gets \frac{\hat{p}^h_{m}}{\sum_{\substack{k=1\\ k \neq j}}^M \hat{p}^h_{k}}, \quad \hat{p}^l_{m} \gets \frac{\hat{p}^l_{m}}{\sum_{\substack{k=1 \\ k \neq j}}^M \hat{p}^l_{k}} \text{ for } m \neq j$ \label{alg: P(X|H L E) p^h update2}
                    \State $p \gets p+1$ \label{alg: P(X|H L E) p update}
                \EndFor
                \State $z \gets z + z^{\prime}$ \label{alg: P(X|H L E) z update2}
            \EndFor
        \EndFor
        \State At the end of the $k$th iteration $z$ holds $P\left(\overline{X} \middle| \overline{H}, \overline{L}, \overline{\Phi}\right)$.
    \end{algorithmic}
\end{algorithm}

\bibliographystyle{IEEEtran}
\bibliography{references}
\end{document}